\newcommand{\uppaal}{\textsc{Uppaal}\xspace}
\newcommand{\forget}[1]{}
\newcommand{\rplus}{\bbbr_{\geq0}}
\newtheorem	{definition}{Definition}
\newtheorem{theorem}{Theorem}
\title{Computing Nash Equilibrium in Wireless Ad~Hoc Networks: A Simulation-Based Approach
\thanks {The paper is supported by the IDEA4CPS project, the Artist MBAT project and VKR Center of Excellence MT-LAB.}
}
\author{Peter~Bulychev
\institute{Aalborg University, Denmark}
\email{pbulychev@cs.aau.dk}
\and
Alexandre~David 
\institute{Aalborg University, Denmark}
\email{adavid@cs.aau.dk}
\and
Kim~G.~Larsen
\institute{Aalborg University, Denmark}
\email{kgl@cs.aau.dk}
\and
Axel~Legay
\institute{INRIA Rennes/Aalborg University}
\email{axel.legay@irisa.fr}
\and
Marius~Miku\v{c}ionis
\institute{Aalborg University, Denmark}
\email{marius@cs.aau.dk}
}
\begin{document}
\maketitle





  \maketitle

  \begin{abstract}  
    This paper studies the problem of computing Nash equilibrium in
    wireless networks modeled by Weighted Timed Automata.  Such
    formalism comes together with a logic that can be used to describe
    complex features such as timed energy constraints. Our
    contribution is a method for solving this problem using
    Statistical Model Checking.  The method has been implemented in
    \uppaal model checker and has been applied to the analysis of Aloha
    CSMA/CD and IEEE 802.15.4 CSMA/CA protocols.

  \end{abstract}

\section{Introduction}

One of the important aspects in designing wireless ad-hoc networks is
to make sure that a network is robust to the selfish behavior of its
participants.  This problem can be formulated in terms of a game
considering that network nodes behave in a rational way and want to
maximize their utility.  A wireless network is robust iff its
configuration satisfies Nash equilibrium~(NE), i.e. it is not
profitable for a node to alter its behavior to the detriment of other
nodes.

In this paper we propose a new methodology to compute NE in wireless
ad-hoc networks. Our approach is based on Statistical Model
Checking~(SMC)~\cite{SMC_Yonas, SMC_Legay}, an approach used in the
formal verification area. SMC has a wide range of applications in the
areas such as systems biology or automotive.  The core idea of SMC is
to monitor a number of simulations of a system and then use the
results of statistics (e.g. sequential analysis) to get an overall
estimate of the probability that the system will behave in some
manner. Thus SMC can help to overcome the undecidability issues, that arise in the formal analysis of wireless ad-hoc networks \cite{ADRST-formats11}.
While the idea ressembles the one of classical Monte Carlo
simulation, it is based on a formal semantics of systems that allows us
to reason on very complex behavioral properties of systems (hence the
terminology). This includes classical reachability property such as
``can i reach such a state?'', but also non trivial properties such as
``can i reach this state x times in less than y units of time?''.

Here we use a semantics for systems that is based on timed automata. We
assume that elements of a network are modeled using Weighted Timed
Automata (WTA), that is a model for timed system together with a
stochastic semantics. The model permits, for example, to describe
stochastically how the behaviors of a system involve with respect to
time.  As an example, probability can be used to say that the system
is more likely to move to the next state in five units of time rather
than in ten. Our approach permits to describe arbitrary distributions
when combining individual components. In addition, WTA are equiped
with nice communication primitives between components, including
e.g. message passing. The fact that we rely on WTA allows us to
describe utility functions with cost constraint temporal logic called
PWCTL. PWCTL is a logic that allows to temporally quantify on
behaviors of components as well as on their individual features
(cost, energy consumption). By the existing results, we know that one
can always define a probability measure on sets of run of a WTA that
satisfy a given property written in cost constraint temporal
logic. The latter clearly arises the importance of defining a formal
semantics for system's models.

Going back to modeling wireless ad-hoc network, we will assume that
each node can work according to one out of finitely many
configurations that we call strategies (a strategy being a choice of a
configuraiton).  We also assume that each network node has a goal and
a node's utility function is equal to the probability that this goal
will be reached on a random system run. Each node will be represented
with a WTA and a goal will be described with PWCTL. Our
algorithm for computing NE consists of two phases.  First, we apply
simulation-based algorithm to compute a strategy that most likely
(heuristic) satisfies NE. This is done by monitoring several runs of
the systems with respect to a property and then use classical Monte
Carlo ratio to estimate the probability that this strategy is indeed
the good one. In the second phase we apply statistics to test the
hypothesis that this strategy actually satisfies NE. Indeed, it is
well-known that such a NE may not exist~\cite{game_theory}, so our
statistical algorithm goes for the best estimate out of it that
corresponds to a relaxed NE for the system.

We implemented a distributed version of this algorithm that uses
\uppaal statistical model checker as a simulation
engine\,\cite{UPPAAL_SMC}. The \uppaal toolset offers a nice user
interface that makes it one of the most widely used formal
verification based tool in academia.  Thanks to the independent
simulations that the algorithm generates, this problem can easily be
parallelized and distributed on, e.g., PC clusters.

Finally, we apply our tool to the analysis of two probabilistic
CSMA~(Carrier Sense Multiple Access) protocols: k-persistent Aloha
CSMA/CD protocol and IEEE 802.15.4 CSMA/CA protocol. The two case
studies we present in this paper serve mainly demonstration purposes.
However, we are the first who study Nash Equilibrium in
\emph{unslotted} Aloha (the slotted version of Aloha was previously
studied in~\cite{Mackenzie2001} and~\cite{wirelessgametheory}).  Our
result that there exists only ``always transmit'' Nash Equilibrium
strategy in IEEE 802.15.4 CSMA/CA reproduces the analogous result for
the IEEE 802.11 CSMA/CA protocol~\cite{Cagalj04oncheating}.

The problem of computing Nash Equilibrium in wireless ad-hoc networks
was first considered in~\cite{Mackenzie2001} and surveid
in~\cite{wirelessgametheory}.  Most of existing works are based on an
analytical solution which does not scale well with complex
models~\cite{conf/pimrc/GhasemiF08,Mackenzie2001}. Similarly to us,
some other works \cite{Cagalj04oncheating} are based on a
simulation-based approach. However, contrary to us, they do not
assign any statistical confidence to the results, and they do not
take advantage of temporal logic to express arbitrary objectives. 

   
  
\section{Weighted Timed Automata}

In this section, we briefly recap the concept of Weighted Timed Automata
(WTA), see \cite{uppaal_formats} for more details. We denote
${\cal{B}}(X)$ to be a finite conjunction of bounds of the form $x\sim
n$ where $x\in X$, $n\in\bbbn$, and $\sim\in\{<,\leq, >, \geq \}$.

\begin{definition} A \emph{Weighted Timed Automaton}\footnote{In the
    classical      notion      of      priced      timed      automata
    \cite{HSCC01,Alur01optimalpaths}   cost-variables   (e.g.   clocks
    where  the rate  may differ  from $1$)  may not  be  referenced in
    guards,  invariants  or  in   resets,  thus  making  e.g.  optimal
    reachability decidable.  This is in contrast to our notion of WTA,
    which is as expressive  as linear hybrid systems \cite{CONCUR00}.} (WTA) is a
  tuple ${\cal A}=(L,\ell_0,X,E,R,I)$ where: (i) $L$ is a finite set
  of locations, (ii) $\ell_0\in L$ is the initial location, (iii) $X$
  is a finite set of real-valued variables called clocks, (iv)
  $E\subseteq L \times {\cal{B}}(X) \times 2^X \times L$ is a finite
  set of edges, $(v)$ $R:L\rightarrow\mathbb{Z}_{\geq 0}$
  assigns a rate vector to each location, and (vi)
  $I:L\rightarrow{\cal{B}}(X)$ assigns an invariant to each location.
\end{definition}

A state of a WTA is a pair $(l,v)$ that consists of a location $l$ and
a valuation of clocks $\nu : X \rightarrow \rplus$.  From a state
$(l,v) \in L \times \rplus^X$ a WTA can either let time progress or do
a discrete transition and reach a new location.  During time delay
clocks are growing with the rates defined by $R(l)$, and the resulting
clock valuation should satisfy invariant $I(l)$.  A discrete
transition from $(l, v)$ to $(l', v')$ is possible if there is $(l, g,
Y, l') \in E$ such that $v$ satisfies $g$ and $v'$ is obtained from
$v$ by resetting clocks from the set $Y$ to $0$.  A run of WTA is a
sequence of alternating time and discrete transitions. Several WTA
$M_1, M_2, \dots, M_n$, can communicate via inputs and outputs to
generate Networks of WTAs (NWTAs) $M_1 \| M_2 \| \dots \| M_n$.

 In our early works \cite{uppaal_formats}, the stochastic semantics of
 WTA components associates probability distributions on both the
 delays one can spend in a given state as well as on the transition
 between states.  In \uppaal uniform distributions are applied for
 bounded delays and exponential distributions for the case where a
 component can remain indefinitely in a state.  In a network of WTAs
 the components repeatedly race against each other, i.e.  they
 independently and stochastically decide on their own how much to
 delay before outputting, with the ``winner'' being the component that
 chooses the minimum delay. As observed in \cite{uppaal_formats}, the
 stochastic semantics of each WTA is rather simple (but quite
 realistic), arbitrarily complex stochastic behavior can be obtained
 by their composition when mixing individual distributions through
 message passing. The beauty of our model is that these distributions
 are naturally and automatically defined by the network of WTAs.

Our implementation supports extension of WTA, coming from the language
of the \uppaal model checker~\cite{UPPAAL}.  Such models can contain
integer variables that can be present in transition guards, and they
can be updated only when a discrete transition is taken.
Additionally, we support other features of the \uppaal model checker's
input language such as data structures and user-defined functions.

A parametrized WTA $M(p)$ is a WTA in which some integer constant
(transition weight or constant in variable assignment/clock invariant)
is replaced by a parameter $p$.

For defining properties we use cost-constraint temporal logic PWCTL,
which contains formulas of the form $\Diamond_{c\leq C}\phi$.  Here
$c$ is an observer clock (that is never reset and grows to
infinity on any infinite run of a WTA), $C \in \rplus$ is a constant and $\phi$ is a
state-predicate.  We say that a run $\pi$ satisfies $\Diamond_{c\leq
  C}\phi$ if there exists a state $(l, v) \in \pi$ in this run such that it satisfies
$\phi$ and $v(c) \leq C$.  We define $Pr[M \models \psi]$ to be equal
to the probability that a random run of $M$ satisfies $\psi$.
    
\section{Modeling Formalism and problem statement}

We consider that each node operates according to one out of finitely
many configurations.  Thus a network of $N$ nodes can be modelled by:
\begin{equation}
S(p_1, p_2, \dots, p_N) \equiv M(p_1) \| M(p_2) \| \dots \| M(p_N) \|
C
\end{equation}
where $M$ is a parametrized model of a node, $p_i \in P$ (i.e, the
behavior of a node relies on some value - strategy - assigned to the
parameters), $P$ is a finite set of configurations and $C$ is a model
of a medium. 

Consider a parameterized NWTA $S(p_1, p_2, \dots, p_N)$ that models a
wireless network of $N$ nodes.  Here each $p_i$ defines a
configuration of a node $i$ and ranges over a finite domain $P$.
Since the players (nodes) are symmetric, we can analyze the game from the point of view of the first node only.
Thus we will consider the goal of the first node only, and this goal is defined by a PWCTL formula $\psi$.

We can view a system as a game $G=(N, P, U)$, where $N$ is a number of
players~(nodes), $P$ is a set of strategies~(parameters) and $U : P^N
\rightarrow [0,1]$ is an utility function of the first player defined
as
\begin{equation}
  U(p_1, p_2, \dots, p_N) \equiv Pr[S(p_1, p_2, \dots, p_N) \models \psi]
\end{equation}

We consider that there is a master
node that knows the network configuration (here the number of nodes) 
and broadcasts the strategy (parameter) that all the nodes should use.

If all the nodes are honest, they will play according to the strategy proposed by the master node.
Thus in this case the master node should use a symmetric optimal strategy, i.e. a strategy $p$ such that for all other strategies $p'$ we have $U(p, p, \dots, p) \geq U(p', p', \dots, p')$.

However, if there are selfish nodes, they might deviate from the symmetric optimal strategy to increase the value of their utility functions (and the rest of the nodes can possibly suffer from that).
Thus we will consider a Nash Equilibrium strategy that is stable with respect to the behavior of such selfish nodes~(but possibly this strategy is less efficient than the symmetric optimal one).
More formally, a strategy $p$ is said to be a Nash Equilibrium (NE), iff for all
$p' \in P$ we have $U(p, p, \dots, p) \geq U(p', p, \dots, p)$.  

However, a Nash Equilibrium may not exist~\cite{game_theory}\footnote{Note, that we assume only non-mixed (pure) strategies.}, thus in this paper we will
consider a \emph{relaxed} definition of Nash Equilibrium.

\begin{definition}\label{nedef}
 A strategy $p$ satisfies symmetric $\delta$-relaxed NE iff for all
 $p' \in P$ we have $U(p, p, \dots, p) \geq \delta \cdot U(p', p,
 \dots, p)$.
\end{definition}

The value of $\delta$ measures the quality of a strategy $p$.  If
$\delta \geq 1$, then $\delta$-relaxed NE satisfies the traditional
definition of the (non-relaxed) NE.  Otherwise, if $1-\delta$ is small,
then we can conclude that a node's gain of switching is negligible and
it'll stick with the $\delta$-relaxed NE strategy. 
A $\delta$-relaxed NE strategy can be also used when a set of possible strategies is infinite.
In this case we can discritize this set (approximate it by a \emph{finite} set of strategies) and search for a $\delta$-relaxed NE strategy 
in this finite set. If an utility function is smooth, then this strategy can be a good approximation for a NE in the original (infinite) set of strategies.

In this paper, we will solve the problem of searching for a strategy
that satisfies $\delta$-relaxed NE for as large $\delta$ as it is
possible.

For readability, in the rest of the paper we will write $U(p', p)$ and
$S(p', p)$ instead of $U(p', p, \dots, p)$ and $S(p', p, \dots, p)$,
respectively.
 

\section{Algorithm for Computing Nash Equilibrium}

One may suggest that in order to compute NE we can compute the values
of $U(p', p)$ for all pairs $(p', p)$ and then use definition
\ref{nedef} to compute the maximal value of $\delta$.  However, we
can't do that because the problem of evaluating PWCTL formula on a
model (i.e. computing $Pr[S \models \psi]$) is undecidable in
general for WTA~\cite{WMTL08}.

In this paper, we will use Statistical Model Checking (SMC)\,\cite{SMC_Legay}
based approach to overcome this undecidability problem.  The main idea
of this approach is to perform a large number of simulations and then
apply the results of statistics to estimate the probability that a
system satisfies a given property.

Our method of computing NE consists of two phases.  During the first
phase (presented in Section 4.1) we apply a simulation-based algorithm
to search for the best candidate $p$ for a Nash Equilibrium.  Then, in
the second phase (presented in Section 4.2), we apply statistics to
evaluate $p$, i.e. to find the maximum $\delta$ such that with a given
significance level we can accept the statistical hypothesis that $p$
is a $\delta$-relaxed Nash Equilibrium.

In the rest of the paper, we use straightforward simulation-based Monte Carlo method for computing estimations of utility function's values. 
In this method we perform $n$ random simulations of $S(p', p)$ for a given pair $(p', p)$ and count the number $k$ of how many simulations satisfied $\psi$.
Then we use the following estimation: $\widetilde{U}(p', p)=k/n$. 

\subsection{Finding a Candidate for a Nash Equilibrium}

\begin{algorithm}[t]
\caption{Computation Of The Best Candidate For Nash Equilibrium}
\label{alg:search}
\begin{small}
\underline{\textbf{Input}}: $P=\{p_i\}$ --- finite set of parameters, $U(p_i, p_k)$ --- utility function, $d \in [0,1]$ --- threshold \\
\underline{\textbf{Algorithm}}: \\
\verb|1. | \textbf{for every} $p_i \in P$ compute estimation $\widetilde{U}(p_i,p_i)$  \\
\verb|2. | $waiting$ := $P$ \\
\verb|3. | $candidates$ := $\emptyset$ \\
\verb|4. | \textbf{while} $len(waiting)>1$: \\
\verb|5. | \verb|  | \textbf{pick} some unexplored pair $(p_i,p_k)\in P \times waiting$ \\
\verb|6. | \verb|  | \textbf{compute} estimation $\widetilde{U}(p_i,p_k)$ \\
\verb|7. | \verb|  | \textbf{if} $\widetilde{U}(p_k,p_k)< d\cdot \widetilde{U}(p_i,p_k)$: \\
\verb|8. | \verb|  |  \verb|  | \textbf{remove} $p_k$ from $waiting$ \\
\verb|9. |  \verb|  | \textbf{else if} $\forall p_i'$ $\widetilde{U}(p_i', p_k)$ is already computed \\
\verb|10.| \verb|  | \verb|  | \textbf{remove} $p_k$ from $waiting$ \\
\verb|11.| \verb|  | \verb|  | \textbf{add} $p_k$ to $candidates$ \\
\verb|12.| \textbf{return} $\operatorname{argmax}_{p \in candidates}\min_{p' \in P}(\widetilde{U}(p, p)/\widetilde{U}(p', p))$
\end{small}
\end{algorithm}

As a first step, the algorithm computes estimations $\widetilde{U}(p',
p)$ for various $p'$ and $p$ and search for a parameter $p$ that maximizes the
value of $\min_{p' \in P}(\widetilde{U}(p, p)/\widetilde{U}(p', p))$.

Additionally, we speedup the search by introducing a heuristic
threshold $d$ (that is a parameter of our algorithm) and pruning
parameters $p$ such that $\widetilde{U}(p, p)/\widetilde{U}(p', p) <
d$.

Our algorithm~(see Algorithm \ref{alg:search}) starts with the
computation of estimations $\widetilde{U}(p_i, p_k)$ at diagonal
points (i.e. when $i=k$).  After that we iteratively pick a random
pair of strategies $(p_i, p_k)$ and compute $\widetilde{U}(p_i, p_k)$.
If $\widetilde{U}(p_k, p_k)/\widetilde{U}(p_i, p_k) < d$, then we
remove strategy $p_k$ from the further consideration and will never
consider again pairs of the form $(?, p_k)$.

We iterate the while-loop until we split all the parameters into those
$p$ for which we already computed the value of $\min_{p' \in
  P}(\widetilde{U}(p, p)/\widetilde{U}(p', p))$, and those, for which
we know that $\widetilde{U}(p, p)/\widetilde{U}(p', p) < d$ for some
$p'$. Then at line 12 we choose a strategy $p$ that maximizes $\min_{p' \in
  P}(\widetilde{U}(p, p)/\widetilde{U}(p', p))$

It should be noted, that if a threshold $d$ is large, then our algorithm can possibly return no result (because all the candidates will be filtered out). 
In this case one can retry with a smaller $d$ and reuse the already computed estimations.
If $d$ is equal to zero, then the algorithm is guaranteed to return a result.

\subsection{Evaluation of a Relevance of the Candidate}\label{evaluation_section}

Consider that after the first phase we selected a strategy $p$. Let
$H_{p, \delta}$ be a statistical hypothesis that $p$ is a
$\delta$-relaxed NE.  Now we want to find the maximal $\delta$ such
that we can accept the hypothesis $H_{p, \delta}$ with a given significance level
$\alpha$ (that is a parameter of the algorithm)\footnote{The significance level is a statistical parameter that defines the probability of accepting a hypothesis although it is actually false.}.

To do that we firstly reestimate $\widetilde{U}(p',p)$ for every
$p' \in P$ (possibly using the number of simulations that is different from the one that was used in the first phase, this will be discussed below).

Then we apply the following theorem:

\begin{theorem}\label{hyp_theorem}
  Suppose that for all $p' \in S$ we estimated $\widetilde{U}(p',  p)$ using $n$ random simulations, and $f(\delta) \leq \alpha$, where 
\begin{equation}\label{th:eq} f(\delta) \equiv \sum_{i=1..N} \frac{1}{2}\Big(1-\operatorname{erf}\big(\sqrt{n}({\widetilde{U}(p, p) - \delta \cdot \widetilde{U}(p_i, p)})\big)\Big)
\end{equation} and $\operatorname{erf}(x)=\frac{2}{\sqrt{\pi}}\int_{0}^x e^{-t^2} dt$ is Gauss error function.
    Then we can accept the statistical hypothesis $H_{p, \delta}$ with the significance level of $\alpha$.
\end{theorem}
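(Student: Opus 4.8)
The plan is to read the composite hypothesis $H_{p,\delta}$ as the conjunction, over the $N$ strategies $p_i$, of the one-sided statements $D_i \ge 0$, where $D_i \equiv U(p,p) - \delta\,U(p_i,p)$; by Definition~\ref{nedef} this conjunction is exactly the assertion that $p$ is a $\delta$-relaxed NE. Accepting $H_{p,\delta}$ in error therefore means that at least one $D_i$ is actually negative, so I would bound the probability of this event and require it to be at most $\alpha$. The estimators $\widetilde U(p,p)$ and $\widetilde U(p_i,p)$ are sample means of $n$ independent Bernoulli trials with success probabilities $U(p,p)$ and $U(p_i,p)$, so the whole argument reduces to controlling the Gaussian fluctuation of a difference of two such means.

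First I would fix a single comparison $i$ and apply the central limit theorem to $\widetilde D_i \equiv \widetilde U(p,p) - \delta\,\widetilde U(p_i,p)$: since the two families of simulations are independent, $\widetilde D_i$ is approximately normal with mean $D_i$ and variance $\frac1n\big(U(p,p)(1-U(p,p)) + \delta^2 U(p_i,p)(1-U(p_i,p))\big)$. Bounding each Bernoulli variance by its maximal value $\tfrac14$ (and the weight $\delta$ by one in the regime of interest) gives the conservative bound $\operatorname{Var}(\widetilde D_i) \le \tfrac{1}{2n}$, i.e. standard deviation at most $1/\sqrt{2n}$. Standardizing and using the identity $\Pr[Z > x] = \tfrac12\big(1-\operatorname{erf}(x/\sqrt2)\big)$ for a standard normal $Z$, the probability that the $i$-th constraint is violated given the observed estimate, $\Pr[D_i < 0]$, is at most $\tfrac12\big(1-\operatorname{erf}(\sqrt n\,\widetilde D_i)\big)$ --- exactly the $i$-th summand of $f(\delta)$. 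Replacing the unknown variance by its upper bound only enlarges this tail probability, so the bound stays conservative.

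Next I would combine the $N$ comparisons with a Bonferroni (union) bound: the probability that $H_{p,\delta}$ is false, i.e. that $\bigcup_i \{D_i < 0\}$ occurs, is at most $\sum_{i=1}^{N}\Pr[D_i<0] \le \sum_{i=1}^{N}\tfrac12\big(1-\operatorname{erf}(\sqrt n\,\widetilde D_i)\big) = f(\delta)$. Hence the hypothesis $f(\delta)\le\alpha$ forces the probability of wrongly accepting $H_{p,\delta}$ below $\alpha$, which is precisely acceptance at significance level $\alpha$. I would also remark that $\widetilde D_i$ decreases in $\delta$, so each error-function term decreases and $f$ is monotone increasing in $\delta$; this makes the phase-two search for the largest admissible $\delta$ well defined, since $\{\delta : f(\delta)\le\alpha\}$ is a down-set.

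The main obstacle is the passage from the exact Bernoulli estimators to the clean $\operatorname{erf}$ expression. Two points need care: justifying the normalization that yields $\sqrt n\,\widetilde D_i$ rather than $\sqrt{2n}\,\widetilde D_i$ or $2\sqrt n\,\widetilde D_i$ inside the error function --- this is where the worst-case combined variance $1/(2n)$ (and the implicit $\delta\le1$) enters, and it is the step that must be checked to keep the test conservative --- and being explicit about the slightly informal move of treating $\tfrac12\big(1-\operatorname{erf}(\sqrt n\,\widetilde D_i)\big)$ as the probability that the parameter $D_i$ is negative given the data, which is the normal-approximation plug-in underlying the whole SMC confidence statement. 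The union bound itself is routine (indeed conservative relative to the sharp intersection--union test), so it is not the difficult part.
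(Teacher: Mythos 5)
Your proof follows essentially the same route as the paper's: a union (Bonferroni) bound over the $N$ constraints, a normal (CLT) approximation of the Bernoulli sample means, the worst-case variance bound $q(1-q)\leq 1/4$ giving a combined variance of at most $1/(2n)$, and the standard-normal tail identity $\Pr[Z>x]=\tfrac12\big(1-\operatorname{erf}(x/\sqrt{2})\big)$, which together yield exactly the summands of $f(\delta)$. If anything you are slightly more careful than the paper, since you make explicit the $\delta^{2}$ factor in the variance of $\widetilde U(p,p)-\delta\,\widetilde U(p_i,p)$ (which the paper's displayed variance silently drops) and you flag the normal-approximation plug-in step that both arguments rely on.
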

\begin{proof}
Let ${q}_i = {U}(p_i, p_k)$ and $\widetilde{q}_i = \widetilde{U}(p_i, p_k)$.
Then $p_k$ satisfies $\delta$-relaxed Nash Equilibrium iff $\forall i \cdot q_k \geq \delta q_i$ 
Consider that we have $n$ Bernoulli random variables $\zeta_1, \zeta_2, \dots, \zeta_n$ where $Pr[\zeta_i=1]=q_i$. 
Consider that random variable $\xi_i$ is a mean of $n$ independent observations of $\zeta_i$, i.e.  $\xi_i = \sum_{j = 1..n}\zeta_i/n$
Then each $\widetilde{q}_i$ is an independent observation of $\xi_i$ and for large $n$ we have $\xi_i \sim \mathcal{N}(q_i, q_i(1-q_i)/n)$.
Probability of making type II error (\emph{accept} $H_{p_k, \delta}$
when it is false) is less or equal to
\begin{equation}\label{proof_eq}
 Pr[\xi_1=\widetilde{q}_1, \xi_2=\widetilde{q}_2, \dots, \xi_n=\widetilde{q}_N \| \bigvee_{i=1..N} q_k < \delta \cdot q_i]
\end{equation}
, that in turn is less or equal to
\begin{equation}
   \sum_{i=1..N} Pr[\xi_1=\widetilde{q}_1, \xi_2=\widetilde{q}_2, \dots, \xi_n=\widetilde{q}_N \| q_k < \delta \cdot q_i]
\end{equation}
, that in turn is less or equal to
\begin{equation}
   \sum_{i=1..N} Pr[\xi_i=\widetilde{q}_i, \xi_k=\widetilde{p}_k \| q_k < \delta \cdot q_i]
\end{equation}
, that in turn is less or equal to
\begin{equation}\label{eq4}
   \sum_{i=1..N} Pr[(\xi_k - \delta \cdot \xi_i) = (\widetilde{q}_k - \delta \cdot \widetilde{q}_i) \| q_k - \delta \cdot q_i< 0]
\end{equation}

For each $i$ we have:
\begin{equation}
(\xi_k - \delta \cdot \xi_i) \sim \mathcal{N}(q_k - \delta q_i, (q_k(1-q_k) + q_i(1-q_i))/n) 
\end{equation}

The truth of the theorem  follows from the fact that $q_k(1-q_k) + q_i(1-q_i) \leq 0.5$. \qed
\end{proof}

We apply this theorem in the following way. We first search for an integer-valued $b$ such that $f(b)<0$.
Then we use bisection numerical method to find a root of an equation $f(\delta) = \alpha$ on the interval $[0, b]$.
It can be easily seen that the function $f$ decreases and $f(0)>0$ and it implies that this $\delta$ satisfies the condition of the theorem.

\begin{figure*}[t]
\begin{center}
\begin{tabular}{cl}
\includegraphics[scale=1]{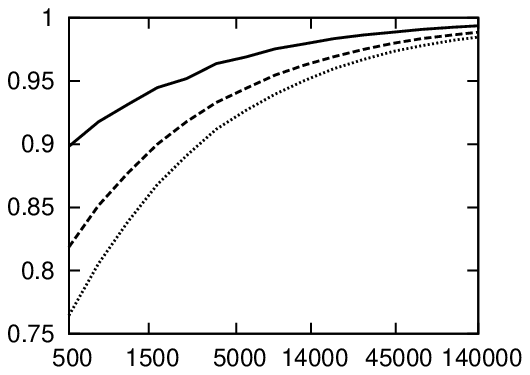}    & \includegraphics[scale=1]{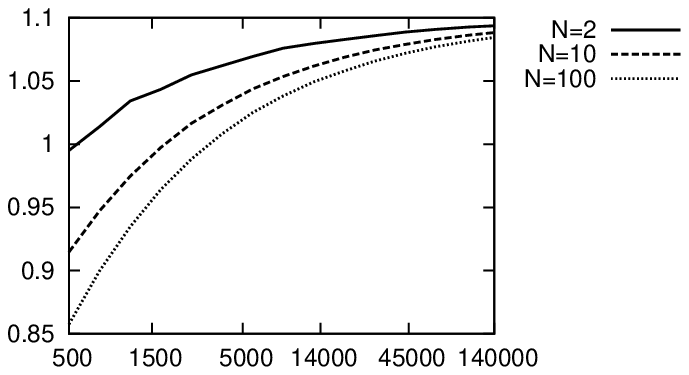} \\
\includegraphics[scale=1]{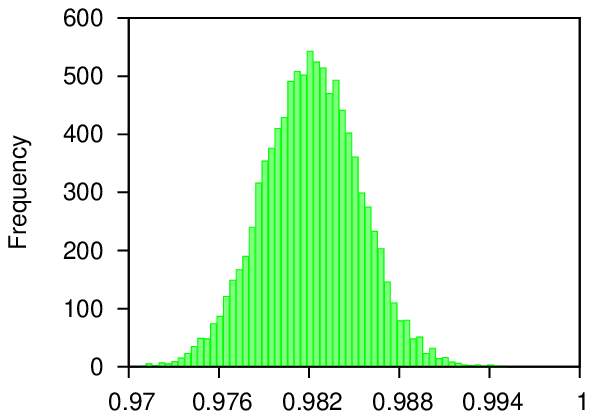}    & \includegraphics[scale=1]{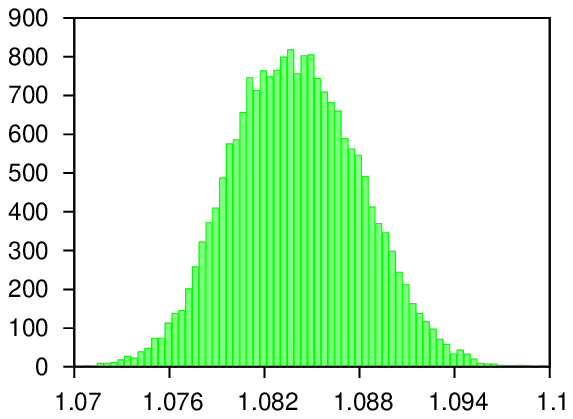}
\end{tabular}\caption{The experimental evaluation of the application of theorem \ref{hyp_theorem}. In the upper plots the $x$ axis denotes the number of simulations $n$, and $y$ axis denotes an average estimated value of $\delta$ computed using theorem \ref{hyp_theorem} for these numbers of simulations and the significance level of $\alpha=0.05$. $N$ is the total number of strategies. The bottom plots demonstrate the frequency distribution for the computed values of $\delta$ for $100000$ simulations and $100$ possible strategies. The left and right plots correspond to the cases when the real value of $\delta$ is equal to $1.0$ and $1.1$. } 
\label{fig:thmplots}
\end{center}
\end{figure*}

Our method provides only a lower bound for $\delta$ and the theorem \ref{hyp_theorem} does not state how many simulations are needed to compute a good estimation of $\delta$.
Indeed, we can compute statistically valid $\delta$ for any number of simulations.
And if the estimated value of $\delta$ is small, it can be a result of the fact that the number of simulations is insufficient or the real value of $\delta$ is small (or both).

Thus we performed an experiment to see how accurate is our method depending on the number of simulations.
We developed a simple model of our method, and in this model we assume that there are $N$ strategies $\{p_1, p_2, \dots p_N\}$, and we want to check that $p_1$ is a Nash Equilibrium.
The value of the utility function $U(p_1, p)$ is equal to $0.5$ for $p \neq p_1$, and it is equal to $0.5 \cdot \delta$ when $p = p_1$ (thus $p_1$ is a $\delta$-relaxed NE).
The experimental data for the cases when $\delta = 1.0$ and $\delta = 1.1$ is presented at Fig.~\ref{fig:thmplots}.
A reader can see that the results for these two cases are similar, and our other experiments (with $\delta$ ranging from $0.5$ to $2.0$) also demonstrate this similarity.
The accuracy of our method increases as the number of simulation $n$ increases or the number of possible strategies $N$ decreases. 
One can also see, that for this experiment $100000$ simulations seem to be enough to compute a good approximation for $\delta$ that is both statistically valid (this is ensured by the theorem~\ref{hyp_theorem}) and close to the real value of $\delta$.

\subsection{Implementation Details}

  We developed a tool written in Python programming language that
  implements the proposed
  algorithm.
  This tool uses \uppaal model checker as a simulation and monitoring engine for PWCTL properties.

  Our algorithm is based on Monte Carlo simulations and thus it is embarrassingly parallisable. 
  In our implementation we exploit this parallelisability by computing the estimations for different pairs of strategies on different nodes.
  The tool can be run on a cluster. One node acts as a master and picks the parameters $p_i$, $p_j$ that the slave nodes use to compute the estimations
  $\widetilde{U}(p_i, p_j)$.  The master node does not use any external
  job scheduler and submits jobs on its own using SSH connection to
  the computational nodes.  Currently we rely on the fact that the nodes
  share the same distributed file system, but in principle the master node
  can deploy all executables and models by itself.

\section{Results of Application}

In this paper we report on results of application of our tool to two
contention resolution protocols.  The first one is Aloha CSMA/CD
protocol that we model on a very abstract level and we'll describe our
model in details.  The second one is IEEE 802.15.4 CSMA/CA that we
model with a high precision and we'll just briefly sketch its
structure.

For both case studies we used the following parameters.
The number of simulations for estimation of utility function's values is equal to $10000$ for the first phase of our algorithm (searching for the best candidate for NE) and $100000$ for the second phase (evaluation of this candidate).
The value of $d$ parameter is equal to $0.9$, and the value of significance level $\alpha$ is equal to $0.05$.

All the experiments were performed on the $8$ node cluster, where each node has an Intel\textsuperscript{\copyright} Core\textsuperscript{\textregistered}2 Quad CPU.

\subsection{Application to Aloha CSMA/CD protocol}

\begin{figure}
\begin{center}
\includegraphics[scale=0.4]{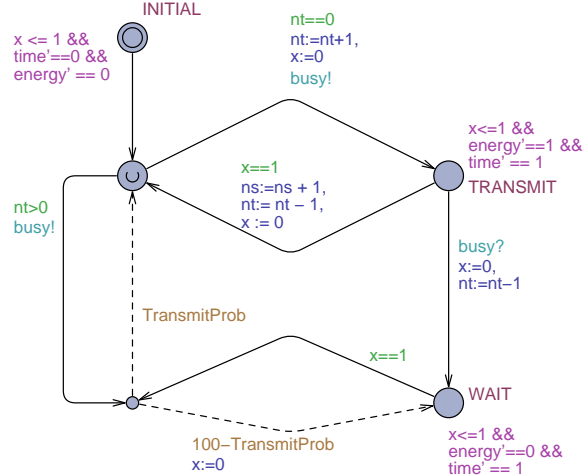}
\end{center}
\caption{Model of Aloha in \uppaal}
\label{fig:aloha}
\end{figure}

  Aloha protocol~\cite{aloha} is a simple Carrier Sense Multiple
  Access with Collision Detection (CSMA/CD) protocol that was used in
  the first known wireless data network developed at the University of
  Hawaii in 1971.
  
  In this protocol it is assumed that there are several nodes that
  share the same wireless medium.  Each node is listening to its own
  signal during its transmission and checks that the signal is not
  corrupted by another node's transmission.  In case of collision both
  nodes will stop transmitting immediately and wait for a random time
  before they'll try to transmit again.

  In our paper we consider unslotted Aloha in which the nodes are not
  necessary synchronized.  Additionally, we study k-persistent variant
  of Aloha, i.e. a protocol implementation in which a random delay
  before retransmission is distributed according to a geometric
  distribution.  This means that for each next time slot a node will
  transmit with probability \verb|TransmitProb| and will wait for one more
  slot (and then decide again) with probability $1-$\verb|TransmitProb|.  We
  assume that a node can change the value of \verb|TransmitProb|, thus a
  strategy of a node consists of choosing a value of \verb|TransmitProb|.
  We also assume that a node can use one out of $100$ of discretized values $\{0.01,
  0.02, \dots, 1\}$ of \verb|TransmitProb|.

  The \uppaal model of a single node is presented at Fig.~\ref{fig:aloha}.
  Wireless media is modeled using a broadcast channel \verb|busy|~(in
  which a signal is sent each time a new transmission starts) and
  integer variable \verb|nt| (that stores the number of stations that
  are currently transmitting).  Variable \verb|ns| stores the number of
  successful transmissions.  Time can pass only in locations
  \verb|INITIAL|, \verb|TRANSMIT| and \verb|WAIT|, two other locations
  are \emph{urgent}.  A node uses clocks \verb|x|, \verb|time| (that
  stores a time passed since the beginning) and \verb|energy| (that
  stores the amount of energy consumed, i.e. the amount of time spent
  in the location \verb|TRANSMIT|).
  
  We assume that there is a random uniformly distributed offset
  between the initial states of the nodes~(it is modeled by delay in
  location \verb|INITIAL|).  This may correspond to the situation,
  when there is a wireless sensor network and all sensors are aimed
  towards the same event.  As soon as this event happens, all the node
  will start transmission, but they will not be necessarily
  synchronized.

\begin{figure*}[t]
\begin{center}
\begin{tabular}{cc}
\includegraphics[scale=0.7]{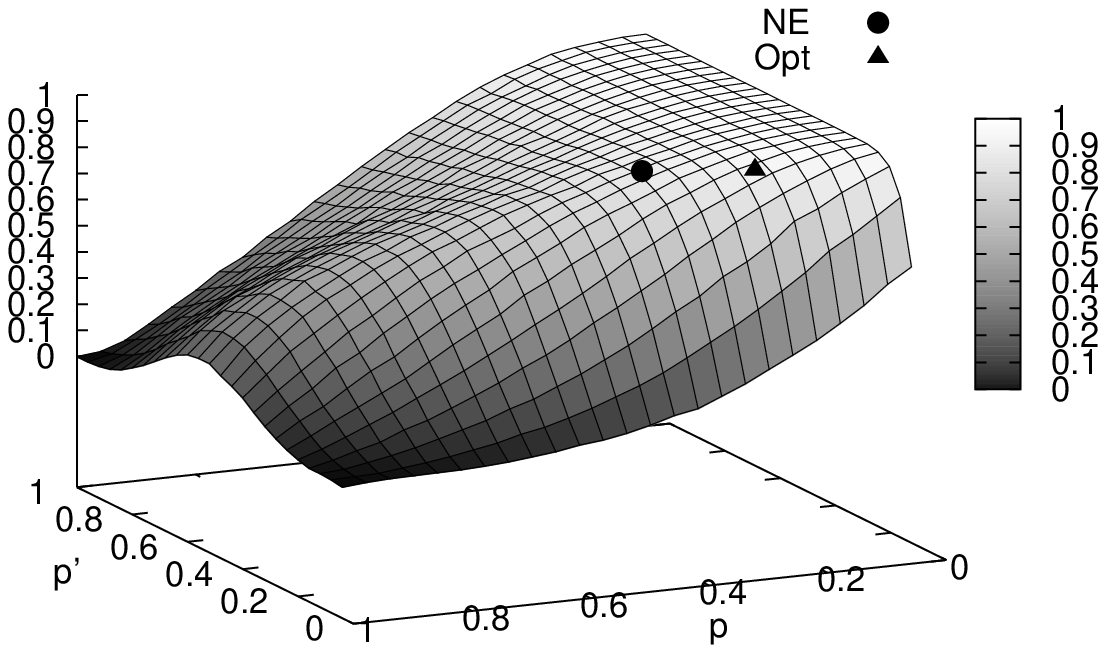}    & \includegraphics[scale=0.6]{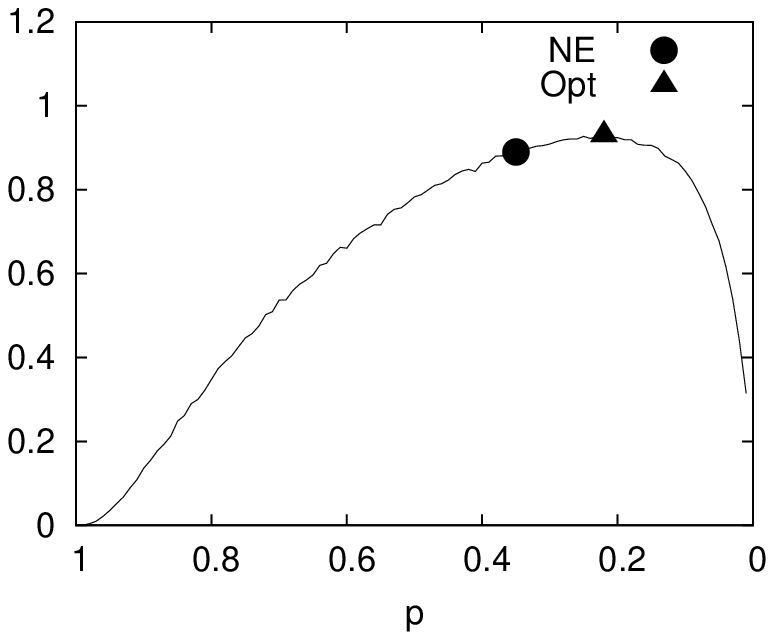} 
\end{tabular}
\caption{$\widetilde{U}(p', p)$ (left) and its diagonal slice (right) for Aloha with $5$ nodes}
\label{fig:alohaplots}
\end{center}
\end{figure*}

\begin{table*}
\begin{center}
 \begin{tabular}{@{\extracolsep{+5pt}}lcccccccc}
\toprule
    Number of nodes\                 		& 2    	& 3    	& 4    	& 5    	&  6    & 7   	& 8    	\\
\hline 
    $\delta$-relaxed NE strategy $p_{NE}$\    		& 0.37 	& 0.40 	& 0.35 	& 0.35 	& 0.41	& 0.42	& 0.41 	\\
    Value of $\delta$\               		& 0.992	& 0.980 & 0.992	& 0.990	& 0.993	& 0.992	& 0.987	\\
    $\widetilde{U}(p_{NE}, p_{NE})$ 				\ 	& 0.99  & 0.98  & 0.95  & 0.89	& 0.75	& 0.61  & 0.50  \\
    Symmetric optimal strategy $p_{opt}$\               			& 0.30 	& 0.30 	& 0.26 	& 0.22	& 0.19	& 0.15 	& 0.14 	\\
    $\widetilde{U}(p_{Opt}, p_{opt})$\ 	& 0.99 	& 0.98  & 0.96 	& 0.90	& 0.87  & 0.98  & 0.76 		\\
    Computation time \ 					& 2m5s 	& 3m44s & 7m62s & 15m45s& 26m11s& 37m55s& 59m15s\\
\bottomrule
 \end{tabular}
\end{center}
\caption{Nash equilibrium (NE) and Symmetric optimal (Opt) strategies for Aloha}
\label{aloha:results}
\end{table*}	

In our experiments we assumed that the goal of a node is to transmit a
single frame within $50$ time units and to limit energy consumption by
$3$.  This goal can be expressed using the following PWCTL formula:

\begin{equation}
  \Diamond_{\text{{\tt Node(0).time}} \leq \text{{\tt 50}}}(\text{{\tt Node(0).ns}} \geq \text{{\tt 1}} \wedge \text{{\tt Node(0).energy}} \leq \text{{\tt 3}})
\end{equation}

It should be noted, that even our (unslotted) Aloha model looks simple, we can't propose an analytical way of computing $U(p', p)$ for a given values of $p'$ and $p$.
The problem is that our model works in real-time and we can't decompose its behavior into rounds (slots) and compute $U(p', p)$ recursively based on the nodes' actions in the current round and values of $U(p', p)$ in the next possible rounds~(like it was done in~\cite{Mackenzie2001} for \emph{slotted} Aloha).


Fig.~\ref{fig:alohaplots} depicts the plot of the utility function estimation $\widetilde{U}(p', p)$ for the first player for the network of $5$ nodes (remind, that $p'$ is a strategy of the first player, and $p$ is common strategy of all the other players). 
It also shows Nash Equilibrium  (NE) and symmetric optimal (Opt) strategies.
It should be noted, that due to the usage of $d$ parameter our algorithm didn't compute $\widetilde{U}(p', p)$ for all possible $p$ and $p'$ (in fact, only $3742$ out of $10000$ values were computed).

Intuitively, a Nash Equilibrium for Aloha exists, because a node has to satisfy both time and energy constraints.
When the honest nodes use the value of \verb|TransmitProb| that is close to $1$, it forces the selfish node to use a smaller value of \verb|TransmitProb| to bound the number of collisions (and hence the energy consumption). When the default value of \verb|TransmitProb| is close to $0$, the selfish node uses a larger value of \verb|TransmitProb| to decrease the expected time before the next retransmission, since the probability of a collision is small for this case.
This ensures that a Nash Equilibrium strategy exists in between $0$ and $1$.

Table~\ref{aloha:results} contains the results for ALOHA with different number of nodes.
It can be seen, that relaxed NE and symmetric optimal strategies coincide for the case of two network nodes, but for the networks with more nodes relaxed NE is less efficient than symmetrical optimal strategy.

\subsection{Application to IEEE 802.15.4 CSMA/CA Protocol}

IEEE 802.15.4 standard~\cite{IEEE.802.15.4} specifies the physical
layer and media access control layer for low-cost and low-rate
wireless personal area networks.  Upper layers are not covered by IEEE
802.15.4 and are left to be extended in industry and individual
applications.  One of such extensions is ZigBee~\cite{zigbee} that
together with IEEE 802.15.4 completes description of a network stack.
Typical applications of ZigBee include smart home control and wireless
sensor networks.


We applied our tool to the analysis of Multiple Access/Collision
Avoidance (CSMA/CA) network contention protocol being a part of IEEE
802.15.4.  Unlike Aloha, the IEEE 802.15.4 standard assumes that a
wireless node can't listen to its own transmission and thus it is not
possible to detect a collision as soon as it occurs and stop
transmission.  A node will detect a collision later when it does not
receive an acknowledgment within a given time bound.  Before each
transmission a node performs a Clear Channel Assessment (CCA),
i.e. checks that no other node is transmitting.  If CCA was not
successful (the medium was busy), then the node waits for a random time
before performing CCA again, and this time is distributed according to
the binary exponential backoff mechanism (that is controlled by the
parameters \verb|MinBE|, \verb|MaxBE| and \verb|UnitBackoff| in our
model).  If CCA was successful (the medium was clear), then the node
switches to the transmitting mode and starts transmission.  However,
this switching takes non-zero time (\verb|TurnAround| in our model),
and another node can start transmitting during this period, that will
lead to a collision.

The standard defines both slotted (with beacon
synchronization) and unslotted modes of CSMA/CA; in our paper we
consider only unslotted one.

\begin{figure*}[t]
\includegraphics[width=0.98\textwidth]{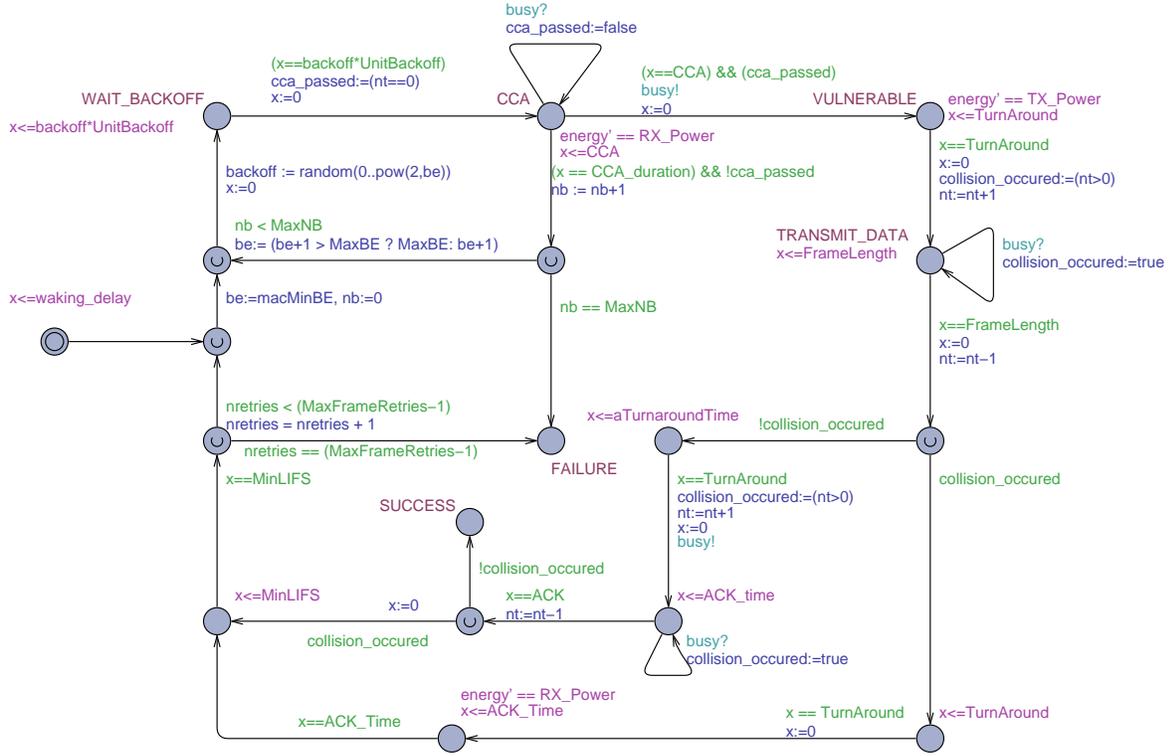}
\caption{Model of IEEE 802.15.4 CSMA/CA}
\label{fig:model}
\end{figure*}

The model of a single node operating according to IEEE 802.15.4
CSMA/CA is depicted at Fig.~\ref{fig:model}.  The values of
\verb|MinBE|, \verb|MaxBE|, \verb|MaxFrameRetries|, \verb|TurnAround|
were taken from the IEEE 802.15.4 standard assuming that the network
is operating on baud rate 20kbps and on 868 Mhz
band. \verb|FrameLength| is considered to be 35 bytes (including 25
bytes for ZigBee header and 10 bytes for the valuable information).
We assume that the frame size is 35 bytes (25 bytes for ZigBee header
and 10 bytes for the actual data).  Energy consumption constraints
\verb|TX_Power| and \verb|RX_Power| were taken from the specification
of U-Power 500 chip (54 mA and 26 mA operating on 3.0V respectively).

We assume that a node can change the value of \verb|UnitBackoff|
parameter. This parameter linearly scales the binary exponential backoff scheme.  
If its value is equal to $0$, then a node will try to
transmit as soon as it wants to.  The large values of
\verb|UnitBackoff| corresponds to large delays before transmission.
We consider that the possible values of \verb|UnitBackoff| are $\{0, 1, 2, \dots, 50\}$.
We assume that the goal of a node for CSMA/CA is similar to the one used in the Aloha case study (i.e. to transmit a frame within the given time and energy bounds).

\begin{figure*}[t]
\begin{center}
\begin{tabular}{cc}
\includegraphics[scale=0.6]{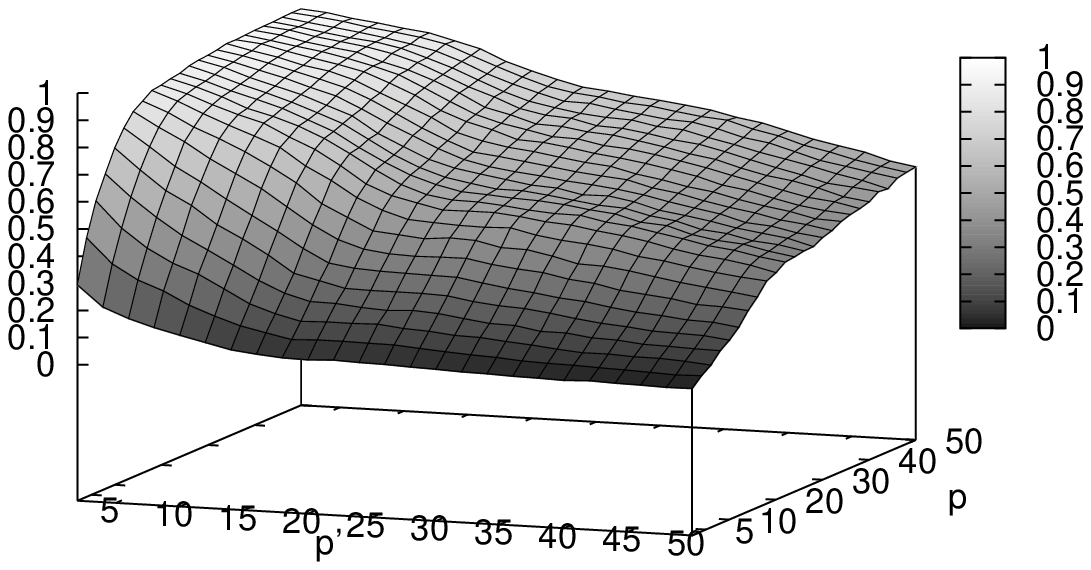} & \includegraphics[scale=0.6]{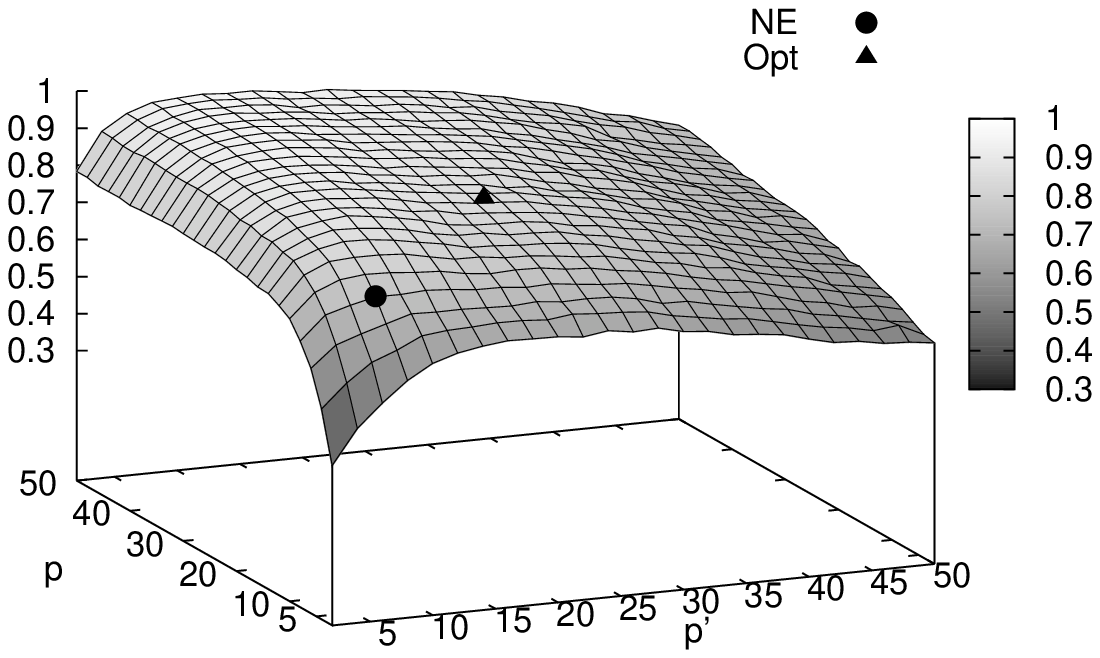}    
\end{tabular}
\caption{$\widetilde{U}(p', p)$ for CSMA/CA for $5$ nodes without(left) and with(right) coalitions}
\label{fig:csmacaplots}
\end{center}
\end{figure*}

Our tool detected a trivial NE \verb|UnitBackoff|=0, see the plot at Fig. \ref{fig:csmacaplots} (left) for an illustration.
It means that a selfish node will always try to transmit as soon as possible by choosing \verb|UnitBackoff|=0.
This coincides with the results of~\cite{Cagalj04oncheating} obtained for IEEE 802.11 CSMA/CA protocol.
Intuitively, it is always profitable to transmit as soon as possible since if a selfish node will retransmit just after the collision, the rest (honest) nodes will probably
detect this during the Clear Channel Assessment procedure and they will not corrupt the retransmission of the selfish node.

\begin{table*}
\begin{center}
 \begin{tabular}{@{\extracolsep{+5pt}}lccccccc}
\toprule
    Number of nodes in one coalition\                 		& 1    	& 2    	& 3    	& 4    	&  5    \\ 
\hline 
    $\delta$-relaxed NE strategy $p_{NE}$\			& 11 	& 8 	& 15 	& 25 	& 28	\\ 
    Value of $\delta$\               				& 0.900	& 0.985 & 0.986	& 0.990	& 0.990	\\ 
    $\widetilde{U}(p_{NE}, p_{NE})$	 \ 			& 0.86  & 0.76  & 0.81  & 0.85	& 0.83	\\ 
    Symmetric optimal strategy $p_{opt}$\       		& 13 	& 23 	& 31 	& 34	& 48	\\ 
    $\widetilde{U}(p_{opt}, p_{opt})$	 \ 			& 0.87	& 0.85 	& 0.87 	& 0.87	& 0.86  \\ 
    Time \ 							& 1m08s	& 5m45s & 7m62s & 32m49s& 57m59s \\ 
\bottomrule
 \end{tabular}
\end{center}
\caption{Nash equilibrium (NE) and Symmetric optimal (Opt) strategies for CSMA/CA with coalitions}
\label{csma:results}
\end{table*}

In order to illustrate our algorithm we also considered the situation when network nodes (game players) form coalitions.
It can correspond to the situation when several network devices belong to the same user and it will not be profitable for the user if these devices compete with each other.
The intuition is that players of the same coalition will not choose ``always transmit'' strategy because in this case they will disturb each other.
This is confirmed by plot at Fig. \ref{fig:csmacaplots} (right) and table \ref{csma:results}, where we considered the case of two coalitions of the same size.

\section{Related Work}

The paper \cite{Mackenzie2001} is the first one that applies the concept of Nash Equilibrium to the analysis of Medium Access and power control games in \emph{slotted} Aloha protocol.
Later this approach has been applied to the most of the layers of a network stack: to the Physical~\cite{EURECOM_1656,Mackenzie2001,cognitive}, Medium Access~\cite{Rao2008,Cagalj04oncheating,TransmissionProbability,conf/pimrc/GhasemiF08}, Network~\cite{Felegyhezi06nashequilibria,Zakiuddin200567} and Application~\cite{LCA-BOOK-2006-001} layers.

Although our approach can be in principle applied to any network
layer, it is particularly well suited for the random access Medium
Access layer protocols, since such protocols possess probabilistic
behavior (here we can use our Weighted Timed Automata semantics) and
work in real-time.  In this settings, our SMC-based approach extends
the manual analytical approach, that can be complicated, error-prone
and typically applied to slotted (discrete time) protocols
only~\cite{TransmissionProbability,Mackenzie2001}.  On the other hand,
our approach extends the simulation-based approach (for instance,
\cite{Cagalj04oncheating}), since we formally describe a modeling
formalism for which we can provide a confidence on the results.

Additionally, in our paper we use the expressive PWCTL logic to
express the goals of the network nodes, and thus to define their
utility functions with respect to time and energy constraints.  This
allows us to apply the same framework to the analysis of different
protocols, while another approaches does not allow such a
generalization.

Our experimental results extend those proposed in \cite{Mackenzie2001}
from the \emph{slotted} Aloha to the unslotted one.  Up to our
knowledge, we are also the first ones, who study coalitions between
nodes in the IEEE 802.15.4 CSMA/CA protocol.

\section{Conclusions}

In this paper we have presented a methodology to apply statistical
model checking to search for a Nash equilibrium on different types of
networks.  Experiments demonstrate the maturity of our technique and
shows that it can be applied in principle to more complex problems.
The technique avoids analytical analysis of the model and contrary to
pure simulation-based techniques, ours provides statistical confidence
on its results.  As future work we will extend the language of our
tool to be able to apply it to other domains such as biological
systems.

\bibliographystyle{eptcs}
\bibliography{main}

\begin{thebibliography}{10}
\providecommand{\bibitemdeclare}[2]{}
\providecommand{\urlprefix}{Available at }
\providecommand{\url}[1]{\texttt{#1}}
\providecommand{\href}[2]{\texttt{#2}}
\providecommand{\urlalt}[2]{\href{#1}{#2}}
\providecommand{\doi}[1]{doi:\urlalt{http://dx.doi.org/#1}{#1}}
\providecommand{\bibinfo}[2]{#2}

\bibitemdeclare{inproceedings}{ADRST-formats11}
\bibitem{ADRST-formats11}
\bibinfo{author}{Parosh Abdulla}, \bibinfo{author}{Giorgio Delzanno},
  \bibinfo{author}{Othmane Rezine}, \bibinfo{author}{Arnaud Sangnier} \&
  \bibinfo{author}{Riccardo Traverso} (\bibinfo{year}{2011}):
  \emph{\bibinfo{title}{On the Verification of Timed Ad Hoc Networks}}.
\newblock In: {\sl \bibinfo{booktitle}{Proceedings of the 9th International
  Conference on Formal Modeling and Analysis of Timed Systems (FORMATS'11)}},
  {\sl \bibinfo{series}{Lecture Notes in Computer Science}}~,
  \bibinfo{publisher}{Springer}, \doi{10.1007/978-3-642-24310-3\_18}.

\bibitemdeclare{inproceedings}{aloha}
\bibitem{aloha}
\bibinfo{author}{Norman Abramson} (\bibinfo{year}{1970}):
  \emph{\bibinfo{title}{The Aloha system: another alternative for computer
  communications}}.
\newblock In: {\sl \bibinfo{booktitle}{Proceedings of the November 17-19, 1970,
  fall joint computer conference}}, \bibinfo{series}{AFIPS '70 (Fall)},
  \bibinfo{publisher}{ACM}, \bibinfo{address}{New York, NY, USA}, pp.
  \bibinfo{pages}{281--285}, \doi{10.1145/1478462.1478502}.

\bibitemdeclare{book}{zigbee}
\bibitem{zigbee}
\bibinfo{author}{ZigBee Alliance} (\bibinfo{year}{2005}):
  \emph{\bibinfo{title}{ZigBee specification 1.0}}.

\bibitemdeclare{inproceedings}{EURECOM_1656}
\bibitem{EURECOM_1656}
\bibinfo{author}{Eitan Altman}, \bibinfo{author}{Nicolas Bonneau},
  \bibinfo{author}{Merouane Debbah} \& \bibinfo{author}{Giuseppe Caire}
  (\bibinfo{year}{2005}): \emph{\bibinfo{title}{An evolutionary game
  perspective to Aloha with power control}}.
\newblock In: {\sl \bibinfo{booktitle}{ITC 19, 19th International Teletraffic
  Congress, August 29 - September 2, 2005}}.

\bibitemdeclare{inproceedings}{Alur01optimalpaths}
\bibitem{Alur01optimalpaths}
\bibinfo{author}{Rajeev Alur}, \bibinfo{author}{Salvatore~La Torre} \&
  \bibinfo{author}{George~J. Pappas} (\bibinfo{year}{2001}):
  \emph{\bibinfo{title}{Optimal Paths in Weighted Timed Automata}}.
\newblock In: {\sl \bibinfo{booktitle}{HSCC'01}},
  \bibinfo{publisher}{Springer}, pp. \bibinfo{pages}{49--62},
  \doi{10.1016/j.tcs.2003.10.038}.

\bibitemdeclare{inproceedings}{HSCC01}
\bibitem{HSCC01}
\bibinfo{author}{Gerd Behrmann}, \bibinfo{author}{Ansgar Fehnker},
  \bibinfo{author}{Thomas Hune}, \bibinfo{author}{Kim~Guldstrand Larsen},
  \bibinfo{author}{Paul Pettersson}, \bibinfo{author}{Judi Romijn} \&
  \bibinfo{author}{Frits~W. Vaandrager} (\bibinfo{year}{2001}):
  \emph{\bibinfo{title}{Minimum-Cost Reachability for Priced Timed Automata}}.
\newblock In: {\sl \bibinfo{booktitle}{HSCC}}, pp. \bibinfo{pages}{147--161}.

\bibitemdeclare{article}{WMTL08}
\bibitem{WMTL08}
\bibinfo{author}{Patricia Bouyer}, \bibinfo{author}{Kim~Guldstrand Larsen} \&
  \bibinfo{author}{Nicolas Markey} (\bibinfo{year}{2008}):
  \emph{\bibinfo{title}{Model Checking One-Clock Priced Timed Automata}}.
\newblock {\sl \bibinfo{journal}{Logical Methods in Computer Science}}
  \bibinfo{volume}{4}(\bibinfo{number}{2}), \doi{10.2168/LMCS-4(2:9)2008}.

\bibitemdeclare{book}{LCA-BOOK-2006-001}
\bibitem{LCA-BOOK-2006-001}
\bibinfo{author}{Levente Buttyan} \& \bibinfo{author}{Jean-Pierre Hubaux}
  (\bibinfo{year}{2008}): \emph{\bibinfo{title}{Security and {C}ooperation in
  {W}ireless {N}etworks}}.
\newblock \bibinfo{publisher}{Cambridge University Press},
  \bibinfo{address}{Cambridge}.

\bibitemdeclare{techreport}{Cagalj04oncheating}
\bibitem{Cagalj04oncheating}
\bibinfo{author}{Mario Cagalj}, \bibinfo{author}{Saurabh Ganeriwal},
  \bibinfo{author}{Imad Aad} \& \bibinfo{author}{Jean-Pierre Hubaux}
  (\bibinfo{year}{2004}): \emph{\bibinfo{title}{On Cheating in CSMA/CA Ad Hoc
  Networks}}.
\newblock \bibinfo{type}{Technical Report}, \bibinfo{institution}{Proc. IEEE
  INFOCOM 2005}.

\bibitemdeclare{inproceedings}{CONCUR00}
\bibitem{CONCUR00}
\bibinfo{author}{Franck Cassez} \& \bibinfo{author}{Kim~Guldstrand Larsen}
  (\bibinfo{year}{2000}): \emph{\bibinfo{title}{The Impressive Power of
  Stopwatches}}.
\newblock In: {\sl \bibinfo{booktitle}{CONCUR}}, pp. \bibinfo{pages}{138--152},
  \doi{10.1007/3-540-44618-4\_12}.

\bibitemdeclare{inproceedings}{TransmissionProbability}
\bibitem{TransmissionProbability}
\bibinfo{author}{Johannes Dams}, \bibinfo{author}{Thomas Kesselheim},  \&
  \bibinfo{author}{Berthold Vocking} (\bibinfo{year}{2011}):
  \emph{\bibinfo{title}{Transmission Probability Control Game with Limited
  Energy}}.
\newblock In: {\sl \bibinfo{booktitle}{IEEE DySPAN, to appear}},
  \doi{10.1109/DYSPAN.2011.5936232}.

\bibitemdeclare{inproceedings}{uppaal_formats}
\bibitem{uppaal_formats}
\bibinfo{author}{Alexandre David}, \bibinfo{author}{Kim~G. Larsen},
  \bibinfo{author}{Axel Legay}, \bibinfo{author}{Marius Mikucionis},
  \bibinfo{author}{Danny Poulsen}, \bibinfo{author}{Jonas van Vliet} \&
  \bibinfo{author}{Zheng Wang} (\bibinfo{year}{2011}):
  \emph{\bibinfo{title}{Statistical Model Checking for Networks of Priced Timed
  Automata}}.
\newblock In: {\sl \bibinfo{booktitle}{Proceedings of Formal Modeling and
  Analysis of Timed Systems}}, \bibinfo{publisher}{LNCS},
  \bibinfo{address}{Aalborg}, pp. \bibinfo{pages}{80--96},
  \doi{10.1007/978-3-642-24310-3}.

\bibitemdeclare{inproceedings}{UPPAAL_SMC}
\bibitem{UPPAAL_SMC}
\bibinfo{author}{Alexandre David}, \bibinfo{author}{Kim~G. Larsen},
  \bibinfo{author}{Axel Legay}, \bibinfo{author}{Marius Mikucionis} \&
  \bibinfo{author}{Zheng Wang} (\bibinfo{year}{2011}):
  \emph{\bibinfo{title}{Time for Statistical Model Checking of Real-time
  Systems}}.
\newblock In: {\sl \bibinfo{booktitle}{Proceedings of the 23$^{rd}$
  International Conference on Computer Aided Verification (CAV)}},
  \bibinfo{series}{LNCS}, \bibinfo{publisher}{Springer Verlag}, pp.
  \bibinfo{pages}{349--355}, \doi{10.1007/978-3-642-22110-1\_27}.

\bibitemdeclare{article}{Felegyhezi06nashequilibria}
\bibitem{Felegyhezi06nashequilibria}
\bibinfo{author}{Mark Felegyhazi}, \bibinfo{author}{Student Member},
  \bibinfo{author}{Jean pierre Hubaux}, \bibinfo{author}{Senior Member} \&
  \bibinfo{author}{Levente Buttyan} (\bibinfo{year}{2006}):
  \emph{\bibinfo{title}{Nash Equilibria of Packet Forwarding Strategies in
  Wireless Ad Hoc Networks}}.
\newblock {\sl \bibinfo{journal}{IEEE Transactions on Mobile Computing}}
  \bibinfo{volume}{5}, \doi{10.1109/TMC.2006.68}.

\bibitemdeclare{inproceedings}{conf/pimrc/GhasemiF08}
\bibitem{conf/pimrc/GhasemiF08}
\bibinfo{author}{Abdorasoul Ghasemi} \& \bibinfo{author}{Karim Faez}
  (\bibinfo{year}{2008}): \emph{\bibinfo{title}{A nash power-aware MAC game for
  ad hoc wireless networks}}.
\newblock In: {\sl \bibinfo{booktitle}{PIMRC}}, \bibinfo{publisher}{IEEE}, pp.
  \bibinfo{pages}{1--5}, \doi{10.1109/PIMRC.2008.4699511}.

\bibitemdeclare{article}{game_theory}
\bibitem{game_theory}
\bibinfo{author}{Ehud Kalai} (\bibinfo{year}{1991}): \emph{\bibinfo{title}{Game
  theory: Analysis of conflict : By Roger B. Myerson, Harvard Univ. 568 pp.}}
\newblock {\sl \bibinfo{journal}{Games and Economic Behavior}}
  \bibinfo{volume}{3}(\bibinfo{number}{3}).

\bibitemdeclare{article}{UPPAAL}
\bibitem{UPPAAL}
\bibinfo{author}{Kim~G.\ Larsen}, \bibinfo{author}{Paul Pettersson} \&
  \bibinfo{author}{Wang Yi} (\bibinfo{year}{1997}): \emph{\bibinfo{title}{{{\sc
  Uppaal}\ in a Nutshell}}}.
\newblock {\sl \bibinfo{journal}{Int.\ Journal on Software Tools for Technology
  Transfer}} \bibinfo{volume}{1}(\bibinfo{number}{1--2}), pp.
  \bibinfo{pages}{134--152}, \doi{10.1007/s100090050010}.

\bibitemdeclare{article}{SMC_Legay}
\bibitem{SMC_Legay}
\bibinfo{author}{Axel Legay} \& \bibinfo{author}{Beno\^{\i}t Delahaye}
  (\bibinfo{year}{2010}): \emph{\bibinfo{title}{Statistical Model Checking : An
  Overview}}.
\newblock {\sl \bibinfo{journal}{CoRR}} \bibinfo{volume}{abs/1005.1327},
  \doi{10.1007/978-3-642-16612-9\_11}.

\bibitemdeclare{article}{Mackenzie2001}
\bibitem{Mackenzie2001}
\bibinfo{author}{A.~B. Mackenzie} \& \bibinfo{author}{S.~B. Wicker}
  (\bibinfo{year}{2001}): \emph{\bibinfo{title}{{Game theory and the design of
  self-configuring, adaptive wireless networks}}}.
\newblock {\sl \bibinfo{journal}{Communications Magazine, IEEE}}
  \bibinfo{volume}{39}(\bibinfo{number}{11}), pp. \bibinfo{pages}{126--131},
  \doi{10.1109/35.965370}.

\bibitemdeclare{inproceedings}{cognitive}
\bibitem{cognitive}
\bibinfo{author}{J.~Neel}, \bibinfo{author}{R.~Menon},
  \bibinfo{author}{A.~MacKenzie} \& \bibinfo{author}{J.~Reed}
  (\bibinfo{year}{2005}): \emph{\bibinfo{title}{Using game theory to aid the
  design of physical layer cognitive radio algorithms}}.
\newblock In: {\sl \bibinfo{booktitle}{Conf. on Economics, Technology and
  Policy of Unlicensed Spectrum}}.

\bibitemdeclare{inproceedings}{Rao2008}
\bibitem{Rao2008}
\bibinfo{author}{P~Nuggehalli}, \bibinfo{author}{M~Sarkar},
  \bibinfo{author}{K~Kulkarni} \& \bibinfo{author}{R~Rao}
  (\bibinfo{year}{2008}): \emph{\bibinfo{title}{A Game-Theoretic Analysis of
  QoS in Wireless MAC}}.
\newblock In: {\sl \bibinfo{booktitle}{INFOCOM 2008. The 27th Conference on
  Computer Communications. IEEE}}, pp. \bibinfo{pages}{1903--1911},
  \doi{10.1109/INFOCOM.2008.254}.

\bibitemdeclare{book}{IEEE.802.15.4}
\bibitem{IEEE.802.15.4}
\bibinfo{author}{IEEE~Computer Society} (\bibinfo{year}{2003}):
  \emph{\bibinfo{title}{Wireless Medium Access Control (MAC) and Physical Layer
  (PHY) Specifications for Low-Rate Wireless Personal Area Networks (WPANs)}}.

\bibitemdeclare{article}{wirelessgametheory}
\bibitem{wirelessgametheory}
\bibinfo{author}{V.~Srivastava}, \bibinfo{author}{J.~Neel},
  \bibinfo{author}{A.B. Mackenzie}, \bibinfo{author}{R.~Menon},
  \bibinfo{author}{L.A. Dasilva}, \bibinfo{author}{J.E. Hicks},
  \bibinfo{author}{J.H. Reed} \& \bibinfo{author}{R.P. Gilles}
  (\bibinfo{year}{2005}): \emph{\bibinfo{title}{Using game theory to analyze
  wireless ad hoc networks}}.
\newblock {\sl \bibinfo{journal}{Communications Surveys Tutorials, IEEE}}
  \bibinfo{volume}{7}(\bibinfo{number}{4}), pp. \bibinfo{pages}{46 -- 56},
  \doi{10.1109/COMST.2005.1593279}.

\bibitemdeclare{phdthesis}{SMC_Yonas}
\bibitem{SMC_Yonas}
\bibinfo{author}{Hakan L.~S. Younes} (\bibinfo{year}{2005}):
  \emph{\bibinfo{title}{Verification and Planning for Stochastic Processes with
  Asynchronous Events}}.
\newblock Ph.D. thesis, \bibinfo{school}{Carnegie Mellon University}.

\bibitemdeclare{article}{Zakiuddin200567}
\bibitem{Zakiuddin200567}
\bibinfo{author}{Irfan Zakiuddin}, \bibinfo{author}{Tim Hawkins} \&
  \bibinfo{author}{Nick Moffat} (\bibinfo{year}{2005}):
  \emph{\bibinfo{title}{Towards A Game Theoretic Understanding of Ad-Hoc
  Routing}}.
\newblock {\sl \bibinfo{journal}{Electronic Notes in Theoretical Computer
  Science}} \bibinfo{volume}{119}(\bibinfo{number}{1}), pp. \bibinfo{pages}{67
  -- 92}, \doi{10.1016/j.entcs.2004.07.009}.

\end{thebibliography}

 \end{document}